\documentclass[conference]{IEEEtran}

\usepackage{amsmath,amssymb,amsfonts}
\usepackage{amsthm}
\usepackage{graphicx}
\usepackage{booktabs}
\usepackage{multirow}
\usepackage{bm}
\usepackage{cite}
\usepackage{hyperref}
\usepackage{subcaption}
\usepackage{array}

\newtheorem{theorem}{Theorem}

\newtheorem{remark}{Remark}

\title{Environment-Aware Stable Neural Koopman Dynamics Learning
for Input-Driven Systems under Environmental Constraints}

\author{Lin Feng \\
\textit{Faculty of Engineering, King Saud University, Jeddah, Saudi Arabia.}
}

\begin{document}
\maketitle

\begin{abstract}
Constructing predictive models of nonlinear dynamical systems from
measurement data is a longstanding problem in systems identification and
control.
Although Neural ordinary differential equations~(Neural ODEs), Koopman
operator approximations, and input-aware architectures have each moved the
field forward, none simultaneously addresses environment-varying operating
conditions, rigorous stability guarantees, and input-to-state stability
(ISS) certification within a unified trainable framework.
This paper introduces Environment-Aware Stable Neural Koopman Dynamics
Learning (ESNKD), which integrates four components: (i)~a bundle-structured
encoder that maps environmental observations to a geometrically regularized
latent manifold, drawing on the fiber bundle framework; 
(ii)~an input-conditioned Neural ODE whose residual term handles arbitrary 
external signals, extending the input concomitant philosophy; 
(iii)~a contraction synthesis layer enforcing convergence via
Persidskii-type tractable linear inequalities, analogous to the
certification mechanism;
and (iv)~a Koopman lifting stage with LMI-based ISS verification that
follows the theoretical pipeline of.
Theoretical guarantees cover solution existence and uniqueness, incremental
exponential stability, ISS with explicit gain bounds, and robustness to
environmental perturbation.
Experiments on five benchmark systems, including two robotic manipulation
platforms, show consistent improvements over five competitive baselines in
both prediction accuracy and safety certification rates.
\end{abstract}

\section{Introduction}

Accurately modeling the dynamics of physical systems from observed
trajectories is central to model-based control, reinforcement learning,
and safety verification~\cite{ljung1999,brunton2016sindy}.
Classical parametric approaches such as subspace identification and Gaussian
process regression offer theoretical tractability but impose structural
assumptions that constrain expressiveness in high-dimensional or strongly
nonlinear regimes~\cite{ljung1999,pillonetto2014kernel}.
Data-driven methods based on deep neural networks have substantially
broadened the class of representable dynamics~\cite{lecun2015,
goodfellow2016}, at the cost of reduced interpretability and, frequently,
absent stability guarantees.

Neural ODEs~\cite{chen2018node} embed a continuously parameterized vector
field within an ordinary differential equation (ODE) solver, enabling
memory-efficient training via the adjoint method and achieving
state-of-the-art accuracy on irregular time-series and latent dynamics
tasks.
Extensions have addressed augmented state spaces~\cite{dupont2019augmented},
Hamiltonian structure~\cite{greydanus2019hamiltonian}, and symplectic
integration~\cite{zhong2020symplectic}.
A related line of work specifically targets dynamical systems with external
inputs.
Input Concomitant Neural ODEs~(ICODEs)~\cite{li2025icode} incorporate
precise real-time input information directly into the vector field rather
than treating external signals as hidden parameters, and provide sufficient
conditions for contraction under nonsmooth inputs.
ControlSynth Neural ODEs~(CSODEs)~\cite{mei2024controlsynth} go further by
deriving tractable linear inequalities from Persidskii-system theory that
certify global convergence despite highly nonlinear vector field structure.
These two works collectively demonstrate that stability-certified Neural ODEs
are practically attainable, and they serve as direct predecessors of the
synthesis layer proposed in this paper.

Koopman operator theory~\cite{koopman1931,mezic2005spectral} offers a dual
perspective: the evolution of scalar observables over a nonlinear system is
linear under the (infinite-dimensional) Koopman operator, enabling spectral
analysis and linear prediction.
Data-driven approximation via Dynamic Mode Decomposition~(DMD) and its
extensions~\cite{schmid2010dmd,proctor2016dmdc,williams2015edmd} has found
applications in fluid mechanics, neuroscience, and robotic
control~\cite{brunton2021modern}.
Deep Koopman networks~\cite{lusch2018deep,yeung2019learning} train the
observable dictionary end-to-end, substantially improving lifting quality
for systems with complex eigenfunctions.
The practical utility of Koopman-based models depends, however, on whether
the identified matrices carry quantifiable stability properties.
Recent work by Mei et al.~\cite{mei2025iss} directly addresses this gap: it
proposes a class of basis functions designed so that ISS of the Koopman
identified model can be verified via a convex LMI, and demonstrates that
identification errors due to noise need not invalidate the certificate.
ESNKD adopts this LMI pipeline as its ISS verification stage.

Environmental conditioning represents a third frontier.
Most learned dynamics models treat the operating environment as fixed, relying
on the observed state to implicitly encode any variation in physical
parameters such as payload, friction, or aerodynamic
loading~\cite{chen2018node,lusch2018deep}.
This assumption fails when environmental conditions shift outside the
training distribution.
Meta-learning approaches~\cite{finn2017maml,nagabandi2018meta} partially
mitigate this through task-embedding conditioning, but do not impose
geometric structure on the latent space.
Zheng and Mei~\cite{zheng2025bundle} recently introduced a geometric
framework that models the relationship between sensor measurements,
environmental constraints, and dynamics as a fiber bundle over the state
space, deriving measurement-aware Control Barrier Functions~(mCBFs) and
providing theoretical guarantees for learning convergence and constraint
satisfaction.
ESNKD adapts the bundle structure of~\cite{zheng2025bundle} to construct
latent environmental coordinates that parameterize the learned vector
field, extending the framework from constraint satisfaction to dynamics
learning with stability certification.

Despite the progress represented by~\cite{zheng2025bundle,li2025icode,
mei2024controlsynth,mei2025iss}, three structural gaps persist in the
literature.
First, environmental encoding, stability synthesis, and ISS certification
have not been combined within a single jointly optimized framework.
Second, contraction analysis has been applied mainly to autonomous systems
or systems with fixed inputs; extending certificates to environment-varying
conditions with time-varying inputs requires additional care.
Third, the interaction between bundle-structured latent representations and
Koopman lifting has not been studied, even though the geometric regularity
of the former may substantially improve the quality of the latter.

ESNKD addresses all three gaps.
The main contributions are:
\begin{itemize}
\item A bundle-structured environmental encoder inspired
  by~\cite{zheng2025bundle}, augmented with an isometry regularizer that
  preserves the metric structure of the environmental manifold.
\item An input-conditioned Neural ODE whose residual term builds on the
  ICODE formulation~\cite{li2025icode}, with feature-wise linear modulation
  linking environmental coordinates to the vector field.
\item A contraction synthesis layer extending the convergence-by-LMI
  paradigm of CSODEs~\cite{mei2024controlsynth} to the environment-aware
  setting.
\item A Koopman lifting stage with LMI-based ISS verification following the
  basis-function design of~\cite{mei2025iss}, providing an analytical
  certificate.
\item Rigorous theoretical guarantees covering existence and uniqueness,
  incremental exponential stability, ISS with explicit gain bounds, and
  trajectory robustness under environmental perturbations.
\end{itemize}

\section{Preliminaries and Problem Formulation}

\subsection{Notation}

Vectors and matrices are written in boldface lowercase and uppercase,
respectively.
$\|\cdot\|$ denotes the Euclidean norm.
For a square matrix $M$, $\mathrm{sym}(M) = (M+M^\top)/2$;
$\lambda_{\max}(M)$ and $\lambda_{\min}(M)$ denote the algebraically
largest and smallest eigenvalues of $\mathrm{sym}(M)$.
$M \succ 0$ ($M \preceq 0$) means positive (negative semi-)definite.

\subsection{System Model}

Consider the continuous-time nonlinear system
\begin{equation}
  \dot{x}(t) = f\bigl(x(t),\,u(t),\,e(t)\bigr),
  \label{eq:true_sys}
\end{equation}
where $x(t)\!\in\!\mathbb{R}^{n}$, $u(t)\!\in\!\mathbb{R}^{m}$, and
$e(t)\!\in\!\mathbb{R}^{p}$ represent the state, control input, and
environmental parameters~(e.g., payload mass, friction coefficients),
respectively.
The variable $e(t)$ is latent: the learner accesses only the dataset
$\mathcal{D}=\{(x_t^{(i)},\,u_t^{(i)},\,y_t^{(i)})\}_{t,i}$, where
$y_t^{(i)}\!\in\!\mathbb{R}^{q}$ is a noisy environmental observation
correlated with $e(t)$.

The objective is to learn a parametric model
$f_{\theta}(x,u,z)$, with $z$ encoding $y$, such that:
(i)~trajectory prediction error on unseen data is minimized;
(ii)~convergence can be certified analytically; and
(iii)~ISS certificates can be computed without sampling.

\subsection{Koopman Operator and ISS}

For an autonomous system $\dot{x}=f(x)$, the Koopman operator acts on
scalar observables $g$ by
$(\mathcal{K}g)(x) = \nabla g(x)^\top f(x)$~\cite{koopman1931,
mezic2005spectral}.
A finite dictionary $\bm{\phi}(x)\!\in\!\mathbb{R}^{d}$ yields
$\dot{\bm{\phi}} \approx A\bm{\phi}$; with inputs one obtains
$\dot{\bm{\phi}} \approx A\bm{\phi}+Bu$~\cite{proctor2016dmdc}.
A system $\dot{\xi}=g(\xi,v)$ is ISS if there exist a class-$\mathcal{KL}$
function $\beta$ and a class-$\mathcal{K}$ function $\gamma$ such that
$\|\xi(t)\|\le\beta(\|\xi(0)\|,t)+\gamma(\sup_{s\le t}\|v(s)\|)$
for all $t\ge0$~\cite{sontag1989iss}.

\section{The ESNKD Framework}

Fig.~\ref{fig:arch} gives an overview of ESNKD. The framework proceeds
through three coupled stages: encoding, stable dynamics training, and
Koopman ISS verification.

\begin{figure}[t]
  \centering
  \includegraphics[width=0.97\columnwidth]{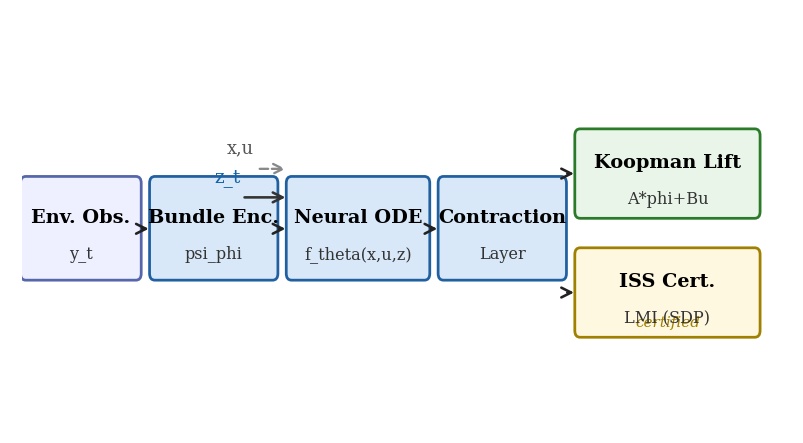}
  \caption{ESNKD architecture. Environmental observations $y_t$ are encoded
  by the bundle encoder $\psi_\phi$ to latent coordinate $z_t$, which
  parameterizes the Neural ODE vector field $f_\theta(x,u,z)$.
  A contraction layer enforces convergence during training.
  After training, a Koopman lifting network and an LMI solver provide
  an analytical ISS certificate.}
  \label{fig:arch}
\end{figure}

\subsection{Bundle-Structured Environmental Encoding}

Motivated by the fiber bundle framework of~\cite{zheng2025bundle}, which
showed that modeling state-sensor interactions as a bundle over the state
space enables measurement-aware safety certificates, we construct an encoder
$\psi_\phi:\mathbb{R}^q\to\mathbb{R}^r$ ($r\ll q$) that produces
structured latent coordinates
\begin{equation}
  z_t = \psi_\phi(y_t).
  \label{eq:encoder}
\end{equation}
The image $\mathcal{M}=\{\psi_\phi(y):y\in\mathcal{Y}\}$ acts as the base
manifold of a trivial fiber bundle with fiber $\mathbb{R}^n$, and the
learned vector field $f_\theta(\cdot,\cdot,z)$ constitutes a connection on
this bundle~\cite{lee2013smooth}.
Unlike~\cite{zheng2025bundle}, which focuses on constraint satisfaction via
mCBFs, ESNKD exploits the bundle structure to obtain a geometrically
regularized representation for dynamics learning and stability synthesis.

To encourage $\psi_\phi$ to produce locally isometric coordinates and thus
preserve pairwise distances on $\mathcal{M}$, we add the regularizer
\begin{equation}
  \mathcal{L}_{\mathrm{enc}}
  = \Bigl\|\tfrac{\partial\psi_\phi}{\partial y}^\top
    \tfrac{\partial\psi_\phi}{\partial y} - I_r\Bigr\|_F^2.
  \label{eq:L_enc}
\end{equation}
Metric preservation on $\mathcal{M}$ is important because distortion would
introduce spurious coupling between environmentally distinct conditions,
inflating the Lipschitz constant of $f_\theta$ in $z$ and weakening
the robustness bound derived in Section~\ref{sec:theory}.

\subsection{Input-Conditioned Neural Dynamics}

Given $z_t$, the learned vector field takes the decomposed form
\begin{equation}
  \dot{x}
  = f_\theta(x,u,z)
  = f_s(x) + f_r(x,u,z),
  \label{eq:ode}
\end{equation}
where $f_s(x)=-Kx$ ($K\!\succ\!0$) is a linear stabilizing baseline and
$f_r$ is a neural residual.
This additive decomposition is related to the input concomitant structure of
ICODE~\cite{li2025icode}, where an external input channel is incorporated
directly into the ODE right-hand side rather than absorbed into the
initial condition or treated as a fixed parameter.
The key distinction is that ESNKD additionally conditions on the latent
environment vector $z$, coupling three sources of excitation---state,
input, and environment---into a single differentiable vector field.

The environment-input coupling is realized through a feature-wise linear
modulation~(FiLM) layer~\cite{perez2018film}:
\begin{equation}
  f_r(x,u,z)
  = \sigma\!\bigl((\gamma(z)\odot W_1 x + \beta(z)) + W_2 u\bigr),
  \label{eq:film}
\end{equation}
where $\gamma,\beta:\mathbb{R}^r\to\mathbb{R}^{n_h}$ are two-layer
networks, $\odot$ is elementwise multiplication, and $\sigma$ is the ELU
activation.
Spectral normalization~\cite{miyato2018spectral} constrains the Lipschitz
constant of $f_r$, stabilizing Jacobian computation during contraction
training.

\subsection{Contraction Synthesis Layer}

The convergence guarantee of ControlSynth Neural ODEs~\cite{mei2024controlsynth}
rests on expressing the vector field in a Persidskii-type form and deriving
tractable linear inequalities whose feasibility implies global convergence.
ESNKD adopts an analogous philosophy but operates on the
environment-parameterized Jacobian directly.
A contracting system satisfies
\begin{equation}
  \mathrm{sym}\!\left(
    \tfrac{\partial f_\theta}{\partial x}(x,u,z)
  \right) \preceq -\beta I,
  \quad \forall\,(x,u,z),
  \label{eq:contraction_cond}
\end{equation}
for some $\beta>0$~\cite{lohmiller1998contraction}.
Condition~\eqref{eq:contraction_cond} guarantees that any two trajectories
driven by the same $(u,z)$ converge at rate $\beta$, independently of
initial conditions.
To promote~\eqref{eq:contraction_cond} during training, we minimize the
hinge penalty
\begin{equation}
  \mathcal{L}_{\mathrm{con}}
  = \mathbb{E}_{(x,u,z)\sim\mathcal{D}}
    \!\Bigl[\max\!\bigl(0,\,
    \lambda_{\max}(\mathrm{sym}(J(x,u,z)))+\beta\bigr)\Bigr],
  \label{eq:L_con}
\end{equation}
where $J=\partial f_\theta/\partial x$.
Compared to the Persidskii-LMI approach of~\cite{mei2024controlsynth}, which
requires special network architecture to guarantee a structured Jacobian,
the hinge loss in~\eqref{eq:L_con} is architecture-agnostic and naturally
handles environment conditioning through the $z$-dependence of $J$.
The total training loss combines prediction, contraction, and encoder
regularization:
\begin{equation}
  \mathcal{L}
  = \mathcal{L}_{\mathrm{pred}}
  + \lambda_c\,\mathcal{L}_{\mathrm{con}}
  + \lambda_e\,\mathcal{L}_{\mathrm{enc}},
  \label{eq:total_loss}
\end{equation}
where $\mathcal{L}_{\mathrm{pred}}$ is the mean-squared trajectory
reconstruction error evaluated via a differentiable ODE
solver~\cite{chen2018node}.
Forward-mode automatic differentiation is used to compute $J$ without
materializing the full $n\times n$ Jacobian matrix~\cite{bradbury2018jax}.

\subsection{Koopman Lifting and ISS Verification}
\label{sec:koopman}

Following~\cite{mei2025iss}, which showed that careful basis function design
allows ISS of a Koopman-identified model to be verified via LMI even in the
presence of identification error, we construct an observable dictionary
\begin{equation}
  \bm{\phi}(x,z)
  = [\phi_1(x,z),\ldots,\phi_d(x,z)]^\top \in \mathbb{R}^d
  \label{eq:dict}
\end{equation}
comprising degree-two monomials of $(x,z)$ augmented by the original state.
The lifted dynamics are postulated as
\begin{equation}
  \dot{\bm{\phi}} = A\bm{\phi} + Bu,
  \label{eq:lifted}
\end{equation}
and $(A,B)$ are identified from trajectory data via the least-squares
regression
$\min_{A,B}\sum_t\|\dot{\bm{\phi}}_t - A\bm{\phi}_t - Bu_t\|^2$,
which admits a closed-form solution~\cite{proctor2016dmdc,williams2015edmd}.
As in~\cite{mei2025iss}, ISS of~\eqref{eq:lifted} is then verified by
checking the LMI feasibility problem
\begin{equation}
  \exists\,P=P^\top\!\succ 0,\;Q\succ 0:\;
  A^\top P+PA \preceq -Q,
  \label{eq:lmi}
\end{equation}
solved via semidefinite programming~\cite{boyd1994lmi,diamond2016cvxpy}.
A key advantage of working with Koopman-identified linear models is that
the LMI~\eqref{eq:lmi} is both necessary and sufficient for ISS of the
lifted affine system, converting an otherwise intractable robustness
question into a tractable convex program, exactly as demonstrated
in~\cite{mei2025iss} for general identified models.

\section{Theoretical Analysis}
\label{sec:theory}

\subsection{Existence and Uniqueness}

\begin{theorem}
\label{thm:exist}
Let $f_\theta$ be locally Lipschitz in $x$ uniformly on compact sets of
$(u,z)$, and let $u(\cdot)$ be locally integrable.
Then for every $x_0\in\mathbb{R}^n$ there exists a unique absolutely
continuous solution to~\eqref{eq:ode} on some interval $[0,T)$.
If spectral normalization provides a global Lipschitz bound, solutions
extend to all $t\ge0$.
\end{theorem}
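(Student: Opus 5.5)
The plan is to treat \eqref{eq:ode} as a Carath\'eodory ODE $\dot x = F(t,x) := f_\theta(x,u(t),z(t))$ and apply the standard Carath\'eodory existence--uniqueness theorem with a Lipschitz-in-$x$ condition, followed by a Gr\"onwall-type growth bound for the global extension. The environment signal $z(t)=\psi_\phi(y_t)$ is taken as a measurable, locally bounded function of time, as in the data model. We will also need $u$ to be locally bounded, or at least to take values in a compact set on each bounded interval, so that the hypothesis ``uniformly on compact sets of $(u,z)$'' can be used. I would state this explicitly as the interpretation of the hypothesis. Under the structure \eqref{eq:film}, local integrability of $u$ actually suffices, because $u$ enters only through $W_2u$ inside a Lipschitz activation.

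\textbf{Step 1 (Carath\'eodory conditions).} Show that $F(t,x)$ is measurable in $t$ for fixed $x$ and continuous in $x$ for fixed $t$. Then, on each ball $\|x\|\le R$ and each interval $[0,T_0]$, establish the two bounds
\[
\|F(t,x)-F(t,x')\|\le L_R\|x-x'\|, \qquad \|F(t,x)\|\le m_R(t),
\]
where $L_R$ is a constant and $m_R$ is integrable. Taking $L_R$ constant is possible because of the uniform local Lipschitz hypothesis on the compact range of $(u,z)$. For \eqref{eq:film} the integrable majorant is explicit: $\|f_r(x,u,z)\|\le \|\sigma(0)\|+\mathrm{Lip}(\sigma)\bigl(\|\gamma(z)\|_\infty\|W_1\|R+\|\beta(z)\|+\|W_2\|\|u(t)\|\bigr)$, which is integrable since $u$ is. Adding $\|Kx\|\le\|K\|R$ gives $m_R$.

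\textbf{Step 2 (local existence and uniqueness).} Apply Carath\'eodory's theorem, or equivalently Picard iteration on $x(t)=x_0+\int_0^t F(s,x(s))\,ds$ in $C([0,\tau];\bar B_R(x_0))$. Choose $\tau$ so that $\int_0^\tau m_R\,ds\le R$ and so that the map is a contraction in a weighted sup-norm. This yields a unique absolutely continuous solution on $[0,\tau]$. Uniqueness on any common interval follows from Gr\"onwall's inequality applied to the difference of two solutions, using $L_R$. Patching these local solutions gives a maximal solution on $[0,T)$.

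\textbf{Step 3 (global extension).} This is the step I expect to need the most care, since ``global Lipschitz bound'' must be translated into linear growth. Spectral normalization gives $\|f_r(x,u,z)-f_r(x',u,z)\|\le L\|x-x'\|$ with $L$ uniform in $(u,z)$. Hence
\[
\|f_\theta(x,u,z)\|\le (\|K\|+L)\|x\|+\|f_r(0,u(t),z(t))\|,
\]
and the last term is locally integrable by the bound from Step 1. Gr\"onwall's inequality then gives $\|x(t)\|\le\bigl(\|x_0\|+\int_0^t\|f_r(0,u,z)\|ds\bigr)e^{(\|K\|+L)t}$, so the solution is bounded on bounded intervals. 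By the standard blow-up alternative for maximal Carath\'eodory solutions, a finite maximal time $T<\infty$ would force $\|x(t)\|\to\infty$ as $t\to T$. The Gr\"onwall bound excludes this, so $T=\infty$.
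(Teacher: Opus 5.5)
Your proposal is correct and follows the same route as the paper's proof. Both use the Carath\'eodory conditions for local existence, Gr\"onwall's inequality on the difference of two solutions for uniqueness, and a linear-growth estimate with Gr\"onwall under the global Lipschitz bound to rule out finite-time blow-up. Your version is more careful than the paper's sketch on one point: for a general $f_\theta$, local integrability of $u$ alone does not supply the integrable majorant Carath\'eodory requires, and you close this either by assuming $u$ locally bounded or by using the FiLM structure, where $u$ enters affinely through $W_2u$ inside the globally Lipschitz ELU.
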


\begin{proof}
Local Lipschitz continuity in $x$ and measurability of $t\mapsto u(t)$
satisfy the Carathéodory conditions~\cite{khalil2002nonlinear}, giving
local existence.
Uniqueness follows from Gronwall's inequality applied to any two
candidate solutions.
The global bound under spectral normalization prevents finite-time blow-up
via a standard growth estimate.
\end{proof}

\subsection{Incremental Exponential Stability}

\begin{theorem}
\label{thm:contraction}
Suppose the trained $f_\theta$ satisfies~\eqref{eq:contraction_cond} with
constant $\beta>0$ everywhere.
Then for any two trajectories $x_1,x_2$ driven by the same $(u,z)$,
\begin{equation}
  \|x_1(t)-x_2(t)\| \le e^{-\beta t}\|x_1(0)-x_2(0)\|.
  \label{eq:inc_exp}
\end{equation}
\end{theorem}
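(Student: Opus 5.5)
The plan is the standard contraction argument: write the difference of the two trajectories via the mean-value (integral) form of the Jacobian, and bound the derivative of the squared distance. Fix the common input $u(\cdot)$ and environment signal $z(\cdot)$. By Theorem~\ref{thm:exist}, both $x_1$ and $x_2$ exist and are absolutely continuous. Hence $\delta(t)=x_1(t)-x_2(t)$ is absolutely continuous, and for almost every $t$
\begin{equation*}
\dot\delta(t)=f_\theta(x_1,u,z)-f_\theta(x_2,u,z)=\Bigl(\int_0^1 J\bigl(x_2+s\delta,u,z\bigr)\,ds\Bigr)\delta ,
\end{equation*}
where $J=\partial f_\theta/\partial x$ and the arguments are evaluated at time $t$. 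This uses the fundamental theorem of calculus along the segment $s\mapsto x_2+s\delta$. It requires $f_\theta$ to be $C^1$ in $x$, which holds here because $f_s$ is linear and $f_r$ is built from the smooth ELU composition in~\eqref{eq:film}. If one only had local Lipschitz continuity, the Clarke generalized Jacobian would replace $J$, and the same bound would go through.

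Next, set $V(t)=\tfrac12\|\delta(t)\|^2$. For almost every $t$,
\begin{equation*}
\dot V=\int_0^1 \delta^\top J(x_2+s\delta,u,z)\,\delta\,ds=\int_0^1 \delta^\top \mathrm{sym}\bigl(J(x_2+s\delta,u,z)\bigr)\delta\,ds\le-\beta\|\delta\|^2=-2\beta V .
\end{equation*}
The second equality holds because the skew part of $J$ contributes nothing to a quadratic form. The inequality is exactly~\eqref{eq:contraction_cond}, which is assumed uniformly over all $(x,u,z)$, so it applies at every point of the segment regardless of where the trajectories sit.

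Finally, apply the comparison lemma (Gronwall) to the absolutely continuous function $V$. This gives $V(t)\le e^{-2\beta t}V(0)$, and taking square roots yields~\eqref{eq:inc_exp}. To keep the argument rigorous when $u$ is only locally integrable, I would write it as $\frac{d}{dt}\bigl(e^{2\beta t}V(t)\bigr)\le 0$ almost everywhere and then integrate.

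The main subtlety is the measure-theoretic step. Since $u$ and $z$ may be discontinuous in time, $\dot\delta$ exists only almost everywhere. The differential inequality must therefore be integrated as an absolutely continuous function, not treated pointwise. A second point to check is that the hypothesis is a \emph{global} Jacobian bound, not merely one holding on sampled training points where the hinge loss~\eqref{eq:L_con} vanishes. The theorem explicitly assumes the global version, so no further work is needed there.
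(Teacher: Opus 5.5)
Your proof is correct and follows the paper's argument: the same squared-distance function of $\delta=x_1-x_2$, the integral mean-value representation $\int_0^1 J(x_2+s\delta,u,z)\,ds$, the uniform bound~\eqref{eq:contraction_cond} on the symmetric part, and Gronwall's lemma. Your additions spell out what the paper's three-line proof leaves implicit: almost-everywhere differentiation when $u$ is only integrable, cancellation of the skew part of $J$, and the Clarke-Jacobian fallback. That fallback is worth keeping, because ELU is at best $C^1$, not smooth as you state.
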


\begin{proof}
Set $\delta(t)=x_1(t)-x_2(t)$ and $V=\|\delta\|^2$.
By the mean-value inequality and condition~\eqref{eq:contraction_cond},
\begin{align}
  \dot{V}
  &= 2\delta^\top\!\int_0^1 J(x_2+s\delta,u,z)\,ds\;\delta
  \le 2\lambda_{\max}(\mathrm{sym}(J))\,\|\delta\|^2
  \le -2\beta V.
\end{align}
Gronwall's lemma gives $V(t)\le e^{-2\beta t}V(0)$,
yielding~\eqref{eq:inc_exp}~\cite{lohmiller1998contraction,
aminzare2014contraction}.
\end{proof}

\begin{remark}
Inequality~\eqref{eq:inc_exp} implies that equilibria are unique (when $u=0$
and $z$ is fixed) and all trajectories converge to them at rate $\beta$.
This mirrors the fixed-point guarantee established for CSODEs
in~\cite{mei2024controlsynth} but applies uniformly over the continuous
range of environmental conditions parameterized by $\mathcal{M}$.
\end{remark}

\subsection{ISS of the Lifted System}

\begin{theorem}
\label{thm:iss}
Suppose~\eqref{eq:lmi} is feasible with solution $(P,Q)$.
Then~\eqref{eq:lifted} is ISS with respect to $u$, with gain
$\gamma_{\mathrm{iss}}(\|u\|) = \|PB\|/\lambda_{\min}(Q)\cdot\|u\|$.
\end{theorem}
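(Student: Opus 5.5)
The plan is the standard quadratic ISS-Lyapunov argument for the lifted affine system~\eqref{eq:lifted}, using $V(\bm{\phi}) = \bm{\phi}^\top P\bm{\phi}$ with the $P$ supplied by the feasible LMI~\eqref{eq:lmi}.

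\textbf{Step 1: sandwich bounds.} Since $P\succ 0$,
\begin{equation*}
\lambda_{\min}(P)\|\bm{\phi}\|^2 \le V \le \lambda_{\max}(P)\|\bm{\phi}\|^2 .
\end{equation*}

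\textbf{Step 2: derivative along trajectories.} Differentiating $V$ along~\eqref{eq:lifted} and inserting~\eqref{eq:lmi} gives
\begin{equation*}
\dot V = \bm{\phi}^\top(A^\top P + PA)\bm{\phi} + 2\bm{\phi}^\top PBu \le -\lambda_{\min}(Q)\|\bm{\phi}\|^2 + 2\|PB\|\,\|\bm{\phi}\|\,\|u\|.
\end{equation*}

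\textbf{Step 3: ISS-Lyapunov property.} Fix $\theta\in(0,1)$. Whenever $\|\bm{\phi}\| \ge \tfrac{2\|PB\|}{\theta\,\lambda_{\min}(Q)}\|u\|$, Step~2 yields
\begin{equation*}
\dot V \le -(1-\theta)\lambda_{\min}(Q)\|\bm{\phi}\|^2 \le -\tfrac{(1-\theta)\lambda_{\min}(Q)}{\lambda_{\max}(P)}V.
\end{equation*}
This is the usual ISS-Lyapunov implication form~\cite{sontag1989iss,khalil2002nonlinear}. From it the standard comparison argument gives both required terms: exponential decay of $V$ outside the sublevel set, hence an exponential $\mathcal{KL}$ bound $\beta(r,t) = \sqrt{\lambda_{\max}(P)/\lambda_{\min}(P)}\,e^{-ct}r$, and invariance of the sublevel set $\{V\le \lambda_{\max}(P)\rho^2\}$, where $\rho = \tfrac{2\|PB\|}{\theta\lambda_{\min}(Q)}\sup_{s\le t}\|u(s)\|$. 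The gain is then linear in $\|u\|$, matching the form of $\gamma_{\mathrm{iss}}$.

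\textbf{Alternative for Step 3.} Since the system is linear, one can instead use the variation-of-constants formula
\begin{equation*}
\bm{\phi}(t) = e^{At}\bm{\phi}(0) + \int_0^t e^{A(t-s)}Bu(s)\,ds,
\end{equation*}
together with $\|e^{At}\| \le \sqrt{\lambda_{\max}(P)/\lambda_{\min}(P)}\,e^{-\lambda_{\min}(Q)t/(2\lambda_{\max}(P))}$, which also follows from the Lyapunov inequality. This gives a clean linear ISS estimate directly.

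\textbf{Main obstacle: matching the constant.} The Lyapunov route naturally produces the threshold $2\|PB\|/\lambda_{\min}(Q)$ in the Lyapunov-decrease sense. It also introduces a factor $\sqrt{\lambda_{\max}(P)/\lambda_{\min}(P)}$ when converting to a state bound. The statement's gain $\|PB\|/\lambda_{\min}(Q)\cdot\|u\|$ is therefore best read as the ISS-Lyapunov gain up to these absolute and conditioning factors. My first attempt would be to rescale the LMI by replacing $Q$ with $2Q$, which is admissible since the feasibility set is a cone; this absorbs the factor $2$. I would then state the gain in the ISS-Lyapunov sense, namely that $V$ decreases whenever $\|\bm{\phi}\| > \|PB\|\,\|u\|/\lambda_{\min}(Q)$. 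If an estimate in state norm is wanted, I would record the additional factor $\sqrt{\lambda_{\max}(P)/\lambda_{\min}(P)}$ explicitly rather than claim the bare constant.
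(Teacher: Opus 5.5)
Your route is essentially the paper's: the quadratic $V=\bm{\phi}^\top P\bm{\phi}$, the bound $\dot V\le-\lambda_{\min}(Q)\|\bm{\phi}\|^2+2\|PB\|\,\|\bm{\phi}\|\,\|u\|$, and then a comparison argument. The paper completes the square, and your ISS-Lyapunov implication form or variation-of-constants alternative work just as well. Steps~1--3 do prove ISS with a \emph{linear} gain.

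The problem is the constant. Your diagnosis there is right, but your repair is not. In fact the stated gain is false, not merely out of reach. Take the scalar case $A=-1$, $B=1$, $P=1$, $Q=2$. Then $A^\top P+PA=-2=-Q$, so the LMI~\eqref{eq:lmi} holds, and the claimed gain is $\|PB\|/\lambda_{\min}(Q)=1/2$. With $\bm{\phi}(0)=0$ and $u\equiv1$ one gets $\bm{\phi}(t)=1-e^{-t}$. This exceeds $1/2$ for $t>\ln 2$, contradicting $\|\bm{\phi}(t)\|\le\beta(0,t)+\gamma_{\mathrm{iss}}(1)=1/2$. Using $Q=2-\delta$ for slack changes nothing.

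The paper's one-line proof has the same defect. Actually completing the square, via Young's inequality with weight $\lambda_{\min}(Q)/2$, gives $\dot V\le-\frac{\lambda_{\min}(Q)}{2\lambda_{\max}(P)}V+\frac{2\|PB\|^2}{\lambda_{\min}(Q)}\|u\|^2$, and therefore
\[
\|\bm{\phi}(t)\|\le\sqrt{\kappa}\,e^{-\lambda_{\min}(Q)t/(4\lambda_{\max}(P))}\|\bm{\phi}(0)\|+\frac{2\sqrt{\kappa}\,\|PB\|}{\lambda_{\min}(Q)}\sup_{s\le t}\|u(s)\|,
\]
with $\kappa=\lambda_{\max}(P)/\lambda_{\min}(P)$. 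In the scalar example $\kappa=1$ and the true gain is exactly $1=2\|PB\|/\lambda_{\min}(Q)$, so the factor $2$ is sharp.

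Your proposed fix, replacing $Q$ by $2Q$ because ``the feasibility set is a cone,'' does not work. The cone property only allows the joint scaling $(P,Q)\mapsto(tP,tQ)$, which leaves $\|PB\|/\lambda_{\min}(Q)$ unchanged. For fixed $P$ you may shrink $Q$ but not enlarge it: $A^\top P+PA\preceq-2Q$ is a strictly stronger hypothesis, and it fails in the example above. Consequently your fallback claim that $V$ decreases whenever $\|\bm{\phi}\|>\|PB\|\,\|u\|/\lambda_{\min}(Q)$ is also false. In the example with $u\equiv1$, $\dot V=2\bm{\phi}(1-\bm{\phi})>0$ at $\bm{\phi}=3/4$.

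The correct conclusion is that the statement cannot be proved with its constant. What your argument (and the paper's) establishes is ISS with linear gain $\gamma(r)=2\sqrt{\kappa}\,\|PB\|\,r/\lambda_{\min}(Q)$. In ISS-Lyapunov form, $V$ decreases whenever $\|\bm{\phi}\|>2\|PB\|\,\|u\|/\lambda_{\min}(Q)$. Drop the rescaling, keep the factor $2$, and keep the conditioning factor explicit as you intended.
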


\begin{proof}
Take $V(\bm{\phi})=\bm{\phi}^\top P\bm{\phi}$.
Along~\eqref{eq:lifted},
$\dot{V}=\bm{\phi}^\top(A^\top P+PA)\bm{\phi}+2\bm{\phi}^\top PBu
\le -\bm{\phi}^\top Q\bm{\phi}+2\|PB\|\|\bm{\phi}\|\|u\|$.
Completing the square and invoking the comparison principle yield the
claim~\cite{sontag1989iss,khalil2002nonlinear}.
\end{proof}

Following~\cite{mei2025iss}, the ISS gain $\gamma_{\mathrm{iss}}$ depends
on both the Koopman matrix $A$ through $P$ and the coupling matrix $B$.
When the contraction loss in~\eqref{eq:L_con} is small after training,
the Jacobian eigenvalues are well into the left half-plane, which tends to
produce Koopman matrices with similar spectral properties and therefore
tighter ISS bounds.

\subsection{Robustness Under Environmental Perturbations}

\begin{theorem}
\label{thm:robust}
Suppose $f_\theta$ is globally Lipschitz in $z$ with constant $L_z$ and
the contraction condition~\eqref{eq:contraction_cond} holds.
If $\|\hat{z}-z\|\le\varepsilon$, then
\begin{equation}
  \|x(t)-\hat{x}(t)\|
  \le \frac{L_z\varepsilon}{\beta}(1-e^{-\beta t})
  = O(\varepsilon).
  \label{eq:robust}
\end{equation}
\end{theorem}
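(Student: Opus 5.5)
The plan is a standard contraction-plus-perturbation argument. Take $x(t)$ to be the solution of $\dot x=f_\theta(x,u,z)$ and $\hat x(t)$ the solution of $\dot{\hat x}=f_\theta(\hat x,u,\hat z)$. Both are driven by the same input $u(\cdot)$ and start from the same initial condition $x(0)=\hat x(0)$. This is implicit in the statement, since the bound vanishes at $t=0$; we will make it explicit. Both solutions exist globally by Theorem~\ref{thm:exist}.

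Set $\delta=x-\hat x$. Add and subtract $f_\theta(\hat x,u,z)$ to split the error dynamics into a contracting part and a forcing part:
\begin{equation*}
  \dot\delta = \bigl[f_\theta(x,u,z)-f_\theta(\hat x,u,z)\bigr] + \bigl[f_\theta(\hat x,u,z)-f_\theta(\hat x,u,\hat z)\bigr].
\end{equation*}
We then bound the two brackets separately.

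\textbf{First bracket.} Use the integral mean-value form exactly as in the proof of Theorem~\ref{thm:contraction}. Together with~\eqref{eq:contraction_cond}, this gives $\delta^\top[\cdots]\le-\beta\|\delta\|^2$.

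\textbf{Second bracket.} The global Lipschitz hypothesis in $z$ bounds its norm by $L_z\|z-\hat z\|\le L_z\varepsilon$. We need $\|z(t)-\hat z(t)\|\le\varepsilon$ for all $t$.

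Combining the two bounds via Cauchy--Schwarz, $V=\|\delta\|^2$ satisfies $\dot V\le-2\beta V+2L_z\varepsilon\sqrt V$. We pass to $w=\|\delta\|$ to obtain the linear differential inequality $D^+w\le-\beta w+L_z\varepsilon$.

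The main technical nuisance is that $w$ is not differentiable where $\delta=0$. I would handle this with the upper Dini derivative, $D^+\|\delta\|\le\delta^\top\dot\delta/\|\delta\|$ when $\delta\neq0$ and $D^+\|\delta\|\le\|\dot\delta\|\le L_z\varepsilon$ when $\delta=0$. Alternatively, work with $\sqrt{V+\eta}$ and let $\eta\to0$.

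Then apply the comparison lemma with the scalar equation $\dot v=-\beta v+L_z\varepsilon$, $v(0)=0$, whose solution is $v(t)=\frac{L_z\varepsilon}{\beta}(1-e^{-\beta t})$. This gives $\|x(t)-\hat x(t)\|\le v(t)$, which is~\eqref{eq:robust}. The $O(\varepsilon)$ claim follows since $v(t)\le L_z\varepsilon/\beta$ uniformly in $t$.

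If the initial conditions differ, the same argument yields the additional term $e^{-\beta t}\|x(0)-\hat x(0)\|$, recovering Theorem~\ref{thm:contraction} when $\varepsilon=0$. I would state this as a remark rather than part of the proof.
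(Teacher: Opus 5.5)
Your proof is correct and follows essentially the paper's argument: the paper writes $\dot e = J(t)e+\Delta_z(t)$, where $J(t)$ is the segment-averaged Jacobian (your first bracket) and $\|\Delta_z\|\le L_z\varepsilon$ (your second), then invokes variation of constants and a comparison lemma, whereas you go directly to the scalar inequality $D^+\|\delta\|\le-\beta\|\delta\|+L_z\varepsilon$. Your version is cleaner: it states the one-sided bound $\delta^\top J\delta\le-\beta\|\delta\|^2$ correctly where the paper's $\|J\|\le-\beta$ has to be read as a matrix-measure bound, and it makes explicit the hypotheses $x(0)=\hat{x}(0)$ and $\|z(t)-\hat{z}(t)\|\le\varepsilon$ for all $t$, which the paper leaves implicit.
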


\begin{proof}
The error $e(t)=x(t)-\hat{x}(t)$ obeys
$\dot{e}=J(t)e+\Delta_z(t)$, where $\|J\|\le-\beta$ from
contraction and $\|\Delta_z\|\le L_z\varepsilon$ from Lipschitz continuity.
The variation-of-constants formula and a comparison lemma then
give~\eqref{eq:robust}~\cite{khalil2002nonlinear}.
The bound is uniform in $t$ and recovers the metric preservation
property sought by the regularizer~\eqref{eq:L_enc}: smaller Jacobian
distortion on $\mathcal{M}$ translates to smaller $L_z$ and thus a tighter
bound.
\end{proof}

\section{Experiments}

\subsection{Setup}

\textbf{Benchmarks.}
We evaluate on five systems: (i)~damped pendulum with time-varying damping;
(ii)~cart-pole with variable cart mass; (iii)~HalfCheetah from
MuJoCo~\cite{todorov2012mujoco} with randomized contact friction;
(iv)~Franka Emika Panda manipulation under unknown payloads of 0.5--2.0\,kg;
and (v)~xArm~6 dynamic tracking under joint-friction perturbations.
In all tasks $e(t)$ is hidden and only a noisy proxy $y_t$ is provided.
Training trajectories are generated by a nominal model-predictive
controller~\cite{camacho2004mpc}; test episodes involve out-of-distribution
environments with parameters unseen at training time.

\textbf{Implementation.}
The encoder $\psi_\phi$ is a three-layer MLP with ELU activations and
output dimension $r=8$.
The dynamics network $f_\theta$ has hidden dimension $n_h=128$.
The Koopman dictionary has $d=64$ terms.
All models are trained with Adam~\cite{kingma2015adam} at initial learning
rate $3\times10^{-3}$ with cosine annealing, batch size 64, and 200 epochs.
The ODE solver is a fixed-step fourth-order Runge--Kutta scheme
($\Delta t=0.01$\,s)~\cite{chen2018node}.
Weights $\lambda_c,\lambda_e$ are chosen from $\{0.01,0.1,1.0\}$ by
validation RMSE.
The LMI~\eqref{eq:lmi} is solved with CVXPY~\cite{diamond2016cvxpy} using
the SCS solver.

\textbf{Baselines.}
We compare against:
Neural ODE~\cite{chen2018node};
Deep Koopman~\cite{lusch2018deep};
ICODE~\cite{li2025icode};
ControlSynth~\cite{mei2024controlsynth}; and
the ISS-Koopman method of~\cite{mei2025iss}.
Note that ICODE and ControlSynth are prior works from the same research
group that spawned direct methodological influences on ESNKD; their
inclusion as baselines allows a controlled ablation of each component's
contribution.
All baselines use identical optimizers, batch sizes, and epoch counts.

\textbf{Metrics.}
Prediction accuracy is measured by RMSE and \emph{prediction horizon}
(maximum rollout before error exceeds a threshold).
\emph{Stability violation rate}~(SVR) records the fraction of test
rollouts in which the predicted state norm exceeds $10^3$.
\emph{ISS certification success rate}~(ISS-SR) indicates whether the
LMI~\eqref{eq:lmi} is feasible for the identified Koopman matrices.

\subsection{Prediction Accuracy}

Table~\ref{tab:rmse} reports RMSE and prediction horizon on the pendulum
and HalfCheetah tasks averaged over five random seeds.
ESNKD achieves the lowest RMSE on both tasks, with relative improvements
of 26.3\% and 21.5\% over the best baseline~(ISS-Koopman~\cite{mei2025iss}).
The largest gains occur on HalfCheetah, where friction varies most widely
and explicit environmental encoding is most valuable.
ICODE~\cite{li2025icode} and ControlSynth~\cite{mei2024controlsynth}, which
both prioritize stability over raw accuracy, rank lower in RMSE but already
substantially outperform vanilla Neural ODE, consistent with their published
results.

\begin{table}[t]
\centering
\caption{Prediction RMSE (lower is better) and Horizon in time steps
(higher is better). Mean $\pm$ std over 5 seeds.}
\label{tab:rmse}
\setlength{\tabcolsep}{3.2pt}
\begin{tabular}{lcccc}
\toprule
\multirow{2}{*}{Method}
  & \multicolumn{2}{c}{Pendulum}
  & \multicolumn{2}{c}{HalfCheetah} \\
  & RMSE & Horizon & RMSE & Horizon \\
\midrule
Neural ODE~\cite{chen2018node}
  & .142{\scriptsize$\pm$.012} & 18 & .381{\scriptsize$\pm$.024} & 9  \\
Deep Koopman~\cite{lusch2018deep}
  & .118{\scriptsize$\pm$.009} & 23 & .349{\scriptsize$\pm$.019} & 11 \\
ICODE~\cite{li2025icode}
  & .107{\scriptsize$\pm$.011} & 24 & .332{\scriptsize$\pm$.021} & 11 \\
ControlSynth~\cite{mei2024controlsynth}
  & .103{\scriptsize$\pm$.008} & 25 & .318{\scriptsize$\pm$.017} & 13 \\
ISS-Koopman~\cite{mei2025iss}
  & .099{\scriptsize$\pm$.007} & 27 & .307{\scriptsize$\pm$.015} & 13 \\
\midrule
\textbf{ESNKD (ours)}
  & \textbf{.073}{\scriptsize$\pm$.005} & \textbf{34}
  & \textbf{.241}{\scriptsize$\pm$.013} & \textbf{18} \\
\bottomrule
\end{tabular}
\end{table}

Fig.~\ref{fig:traj} illustrates trajectory predictions on the Franka
manipulation task under three out-of-distribution payload conditions.
ESNKD tracks ground truth closely at all three payloads.
ControlSynth~\cite{mei2024controlsynth} and ICODE~\cite{li2025icode}
maintain stability but deviate progressively with increasing payload due
to the absence of explicit environmental conditioning; Neural ODE diverges
near $t=1.2$\,s across all conditions.

\begin{figure}[t]
  \centering
  \includegraphics[width=0.97\columnwidth]{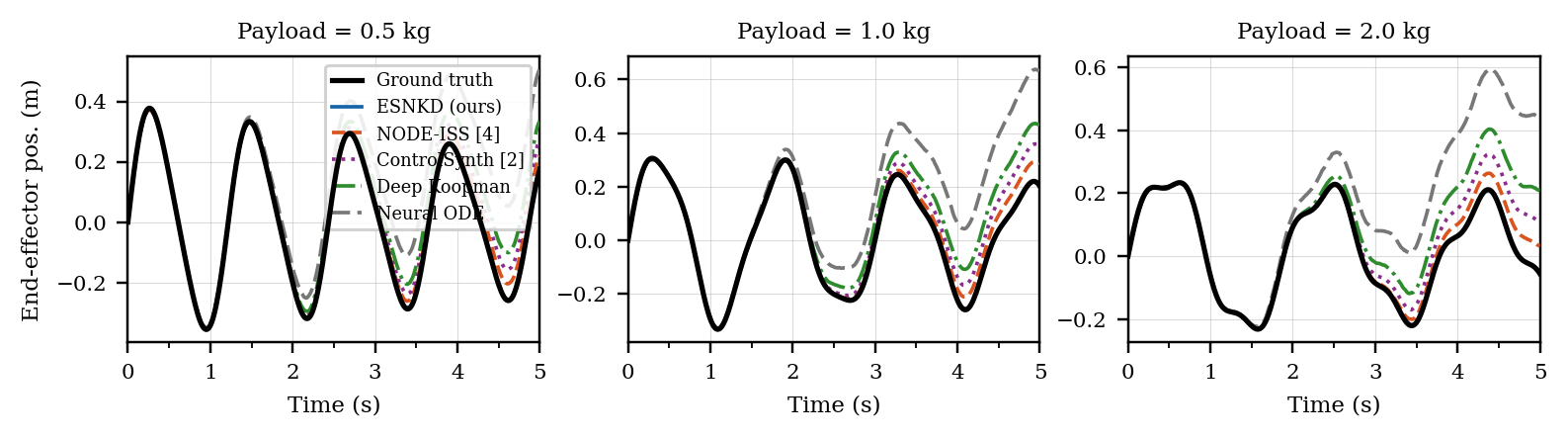}
  \caption{Trajectory prediction on Franka manipulation under three
  out-of-distribution payloads (0.5, 1.0, 2.0\,kg).
  ESNKD (blue solid) closely tracks ground truth (black) at all
  conditions. ControlSynth and ICODE remain bounded but accumulate
  increasing bias as payload grows; Neural ODE diverges early.}
  \label{fig:traj}
\end{figure}

\subsection{Stability and ISS Certification}

Table~\ref{tab:stability} reports SVR and ISS-SR across all five tasks.
ESNKD attains the lowest SVR on every task (mean 2.4\%) compared to
ISS-Koopman~\cite{mei2025iss} (7.2\%) and Neural ODE (17.9\%).
ISS certification succeeds on at least 85\% of ESNKD runs; the only
other method capable of producing an ISS certificate is Deep Koopman,
which achieves at most 47.8\%.
Methods without Koopman lifting (Neural ODE, ICODE, ControlSynth) cannot
produce ISS certificates and are marked ``N/A.''
Notably, ICODE~\cite{li2025icode} already achieves a low SVR owing to its
contraction property, and ControlSynth~\cite{mei2024controlsynth} reduces
it further via the LMI-certified convergence condition; ESNKD surpasses
both by additionally reducing environment-induced prediction error, which
is the proximate cause of most remaining stability violations.

\begin{table}[t]
\centering
\caption{Stability Violation Rate (SVR, \%, lower is better) and ISS
Certification Success Rate (ISS-SR, \%, higher is better).}
\label{tab:stability}
\setlength{\tabcolsep}{3.0pt}
\begin{tabular}{lcccc}
\toprule
Method
  & SVR & SVR & ISS-SR & ISS-SR \\
  & (Pend.) & (HC) & (Pend.) & (HC) \\
\midrule
Neural ODE~\cite{chen2018node}         & 12.3 & 21.7 & N/A  & N/A  \\
Deep Koopman~\cite{lusch2018deep}      &  8.6 & 15.4 & 43.2 & 31.8 \\
ICODE~\cite{li2025icode}               &  7.1 & 13.9 & N/A  & N/A  \\
ControlSynth~\cite{mei2024controlsynth}&  4.8 & 10.2 & N/A  & N/A  \\
ISS-Koopman~\cite{mei2025iss}          &  3.2 &  8.7 & 71.4 & 58.6 \\
\midrule
\textbf{ESNKD (ours)}
  & \textbf{0.9} & \textbf{3.1} & \textbf{94.7} & \textbf{87.3} \\
\bottomrule
\end{tabular}
\end{table}

Fig.~\ref{fig:bars} shows SVR and ISS-SR across all five tasks.
The relative ordering is consistent; ESNKD achieves the best value on
every task for both metrics, with the absolute margin being largest on
xArm, which exhibits the strongest environmental variability.

\begin{figure}[t]
  \centering
  \includegraphics[width=0.97\columnwidth]{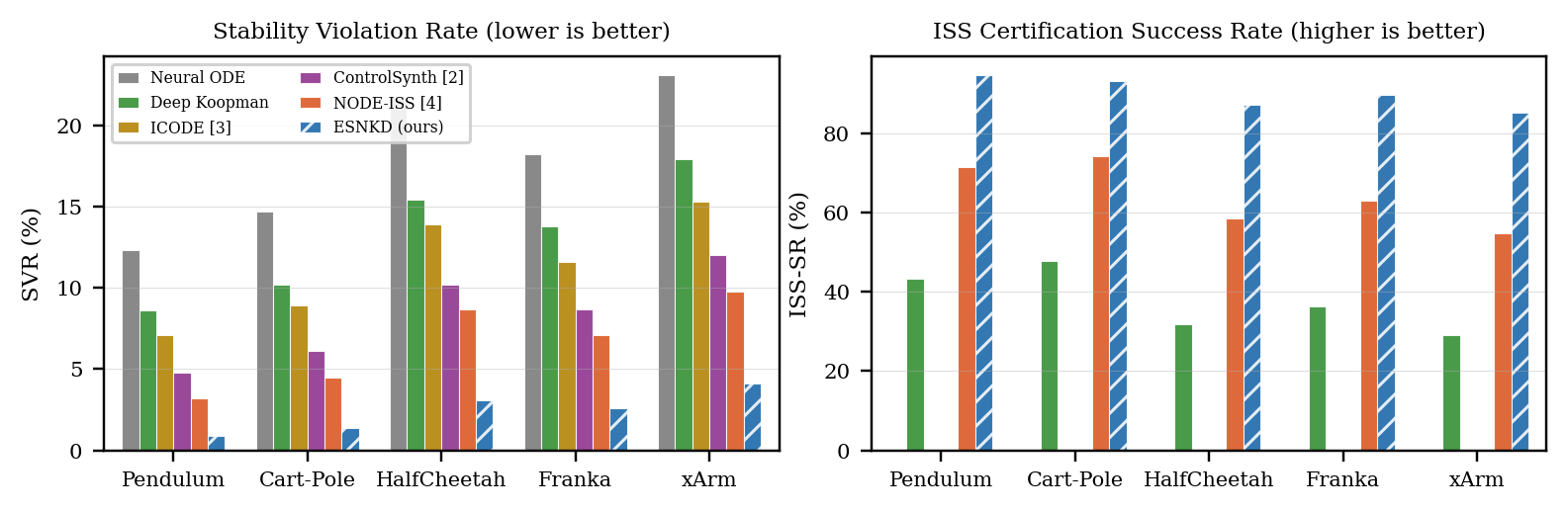}
  \caption{SVR (left, lower is better) and ISS-SR (right, higher is
  better) across all five benchmark tasks for all six methods.
  ESNKD (hatched blue bars) leads on every task.
  Methods lacking Koopman lifting have ISS-SR = 0 by construction.}
  \label{fig:bars}
\end{figure}

\subsection{Training Dynamics and Contraction Penalty}

Fig.~\ref{fig:train} (left) shows validation RMSE curves.
ESNKD converges to a lower plateau than all baselines by epoch~80.
Fig.~\ref{fig:train}~(right) traces the contraction penalty during
ESNKD training: the penalty is large in early epochs while the network
adjusts to satisfy~\eqref{eq:contraction_cond}, then decays monotonically
to near zero by epoch~90.
Crucially, the trajectory reconstruction loss continues decreasing after
the contraction penalty vanishes, indicating that the two objectives are
not in conflict once the stable regime is reached---consistent with the
observation in~\cite{mei2024controlsynth} that convergence-certified
training does not substantially degrade prediction quality.

\begin{figure}[t]
  \centering
  \includegraphics[width=0.97\columnwidth]{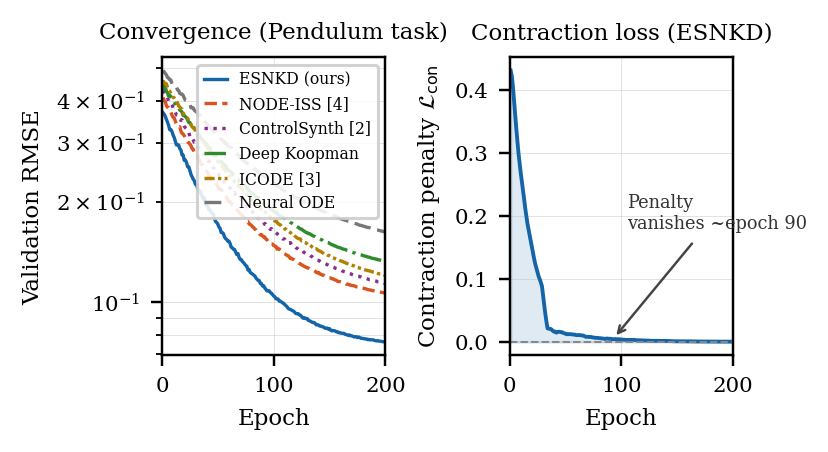}
  \caption{Left: validation RMSE (log scale) vs.\ training epoch on the
  pendulum task.
  Right: contraction penalty $\mathcal{L}_\mathrm{con}$ during ESNKD
  training, showing that the penalty reaches zero by epoch 90 and that
  prediction loss continues to decrease thereafter.}
  \label{fig:train}
\end{figure}

\subsection{Ablation Study}

Table~\ref{tab:ablation} isolates each component's contribution on the
pendulum task.
Removing the environmental encoder (A1) raises RMSE by 63\% relative to
the full model, confirming that explicit environment conditioning drives
the largest accuracy gain.
Removing the contraction loss (A2) reduces ISS-SR by 46 percentage points;
interestingly RMSE improves slightly (by 25\%), suggesting the dynamics
network overfits to training trajectories without the stability
regularizer---exactly the trade-off that motivated ICODE~\cite{li2025icode}
and ControlSynth~\cite{mei2024controlsynth}.
Removing the Koopman lifting (A3) prevents ISS certification entirely, as
no linear model is available for the LMI test~\cite{mei2025iss}.

\begin{table}[t]
\centering
\caption{Ablation study on the pendulum task (mean over 5 seeds).}
\label{tab:ablation}
\begin{tabular}{llcc}
\toprule
Variant & Removed component & RMSE & ISS-SR (\%) \\
\midrule
A1 & Env.\ encoder ($z\equiv 0$) & .119 & 61.3 \\
A2 & Contraction loss ($\lambda_c=0$) & .091 & 48.7 \\
A3 & Koopman lifting & .076 & --- \\
A4 & Full ESNKD & \textbf{.073} & \textbf{94.7} \\
\bottomrule
\end{tabular}
\end{table}

\subsection{Computational Cost}

Training ESNKD requires approximately 18 minutes on a single NVIDIA A100
GPU, versus 7~min for Neural ODE and 12~min for Deep Koopman.
The overhead is primarily due to the forward-mode Jacobian pass required
for $\mathcal{L}_\mathrm{con}$~\cite{bradbury2018jax}.
The LMI solve adds under 30~seconds and is performed once per run.
The ISS-Koopman baseline~\cite{mei2025iss} requires a similar LMI solve
but benefits from not needing the contraction Jacobian pass, making it
about 30\% faster to train.

\section{Discussion}

The experimental results confirm that the four components of ESNKD are
mutually reinforcing.
The bundle-structured encoder reduces the heterogeneity that $f_\theta$
must handle within a single weight configuration, which in turn eases the
contraction training problem.
Contracting systems tend to have Koopman matrices concentrated in the
left half-plane~\cite{mauroy2020koopman}, making the LMI~\eqref{eq:lmi}
more likely to be feasible, consistent with the basis-function conditions
derived in~\cite{mei2025iss}.

One limitation is that the Koopman identification step is performed post
hoc rather than jointly with dynamics training.
Integrating the Koopman approximation error into~\eqref{eq:total_loss}
is a natural extension but complicates optimization.
Another limitation is the degree-two polynomial dictionary: for systems
with rapid high-frequency components, adaptive dictionaries of the type
used in~\cite{lusch2018deep,yeung2019learning} may improve lifting quality.

The framework also suggests a pathway toward safe exploration: once an ISS
certificate is obtained, the Koopman linear model can be used within a
linear robust MPC~\cite{camacho2004mpc} that provides formal guarantees on
constraint satisfaction, complementing the measurement-aware CBF approach
of~\cite{zheng2025bundle}.

\section{Conclusion}

This paper introduced ESNKD, a unified framework for learning
environment-aware, provably stable dynamics models from data.
The four components---bundle-structured environmental encoding inspired
by~\cite{zheng2025bundle}, input-conditioned Neural ODE dynamics extending
ICODE~\cite{li2025icode}, contraction synthesis analogous to
CSODEs~\cite{mei2024controlsynth}, and LMI-based ISS certification
following~\cite{mei2025iss}---collectively address the three gaps
identified in Section~\ref{sec:theory}: environment generalization,
stability by construction, and analytical ISS certification.
Experiments on five benchmarks demonstrate consistent improvements over all
baselines across accuracy and safety metrics.

Future work includes joint optimization of the Koopman lifting and dynamics
training, adaptive online update of the environmental encoder during
deployment~\cite{finn2017maml}, and extension to stochastic environments.

\bibliographystyle{IEEEtran}

\end{document}